\documentclass[conference]{IEEEtran}
\IEEEoverridecommandlockouts
\usepackage{cite}
\usepackage{amsmath,amssymb,amsfonts,amsthm}
\usepackage{algorithmic}
\usepackage{graphicx}
\usepackage{textcomp}
\usepackage{enumerate}
\usepackage{latexsym}
\usepackage{mathtools}
\usepackage{bm}
\usepackage{ascmac}
\def\BibTeX{{\rm B\kern-.05em{\sc i\kern-.025em b}\kern-.08em
    T\kern-.1667em\lower.7ex\hbox{E}\kern-.125emX}}
\DeclarePairedDelimiter{\abs}{\lvert}{\rvert}
\theoremstyle{definition}
\newtheorem{theorem}{Theorem}
\newtheorem{definition}{Definition}
\newtheorem{assumption}{Assumption}

\newtheorem{remark}{Remark}
\newtheorem{example}{Example}

\newcommand{\argmax}{\mathop{\rm arg\,max\,}\limits}

\newcommand{\defeq}{\overset{\mbox{\tiny{\textit{def}}}}{=}}
\newcommand{\ie}{\textrm{i}.\textrm{e}.}
\newcommand{\ipid}{\textrm{i}.\textrm{p}.\textrm{i}.\textrm{d}.}
\newcommand{\iid}{\textrm{i}.\textrm{i}.\textrm{d}.}
\newcommand{\etal}{\textit{et}\,\textit{al}.}

\newcommand{\btheta}{\bm{\theta}}
\title{An Efficient Bayes Coding Algorithm for the Non-Stationary Source in Which Context Tree Model Varies from Interval to Interval
}

\author{\IEEEauthorblockN{Koshi Shimada}
\IEEEauthorblockA{\textit{Department of Pure and Applied Mathematics} \\
\textit{Waseda University}\\
Tokyo, Japan \\
shimada.koshi.re@gmail.com}
\and
\IEEEauthorblockN{Shota Saito}
\IEEEauthorblockA{\textit{Faculty of Informatics} \\
\textit{Gunma University}\\
Gunma, Japan \\
shota.s@gunma-u.ac.jp}
\and
\IEEEauthorblockN{Toshiyasu Matsushima}
\IEEEauthorblockA{\textit{Department of Pure and Applied Mathematics} \\
\textit{Waseda University}\\
Tokyo, Japan \\
toshimat@waseda.jp}
}

\begin{document}

\maketitle

\begin{abstract}
  The context tree source is a source model in which the occurrence probability of symbols is determined from a finite past sequence, and is a broader class of sources that includes {\iid} and Markov sources.
  The proposed source model in this paper represents that a subsequence in each interval is generated from a different context tree model.
  The Bayes code for such sources requires weighting of the posterior probability distributions for the change patterns of the context tree source and all possible context tree models.
  Therefore, the challenge is how to reduce this exponential order computational complexity.
  In this paper, we assume a special class of prior probability distribution of change patterns and context tree models, and propose an efficient Bayes coding algorithm whose computational complexity is the polynomial order.
\end{abstract}


\section{Introduction}

The arithmetic codes asymptotically achieve the minimal expected length for lossless source coding.
The problem with this method is that it cannot be used unless the probabilistic structure of the source is known in advance.
Therefore, universal codes, which can be used when the probability distribution of the source is unknown, have been studied.

The context tree source is one of the major source models for universal coding, and the CTW (Context Tree Weighting)\cite{ctw} is known as an efficient universal code for context tree sources.
The CTW method can be interpreted as a special case of the Bayes code proposed by Matsushima and Hirasawa\cite{matsu1_ctm}.
The CTW method encodes the entire source sequence at once, which is not an efficient use of memory and causes underflow problem in calculation, whereas the Bayes code of Matsushima and Hirasawa\cite{matsu1_ctm} can be encoded sequentially and is free from these problems.
It is known that the Bayes code has equal codeword length when encoded sequentially and when the entire sequence is encoded at once\cite{matsu_bayes}.

However, in the Bayes code, the computational complexity of weighting by the posterior probability of the context tree models increases exponentially according to the maximum depth of the context tree models.
Matsushima and Hirasawa\cite{matsu1_ctm} developed the efficient Bayes coding algorithm by assuming an appropriate class of prior probability distributions for the context tree models, which reduces the computational complexity from the exponential order to the polynomial order.

Now, a source model for source coding should be able to describe in a concise mathematical manner, but also should be better reflects the probability structure of the real data sequence to be compressed.
For example, the context tree source includes {\iid} source and Markov source inherent in itself.
It is a broader class of sources, and has been applied to text data, for example.

On the other hand, there are cases where it is appropriate to think of symbols as being generated according to a different context tree source for each interval, rather than modeling the entire data series as being generated according to a single context tree source.
For example, in the case of the human genome, the DNA sequence consists of about $3$ billion base pairs, and it is described as a pair of series of about $30$ billion in length with four different alphabets: A, G, T, and C.
Although Markov source is sometimes assumed in DNA sequence compression algorithms\cite{cao}, it is known that there are genetic and non-genetic regions in the human genome, which have different structural characteristics.
Therefore, in this paper, we present a non-stationary source that context tree source changes from interval to interval.

An example of a non-stationary source where the source changes from interval to interval is an {\ipid} (independently piecewise identically distributed) source\cite{ipid1,ipid2}.
An {\ipid} source is a source consisting of an {\iid} sequences of parameters that are different for each interval.
It can be regarded as a special case of the proposed source in this paper.
An efficient Bayes code for {\ipid} sources has already been proposed by Suso {\etal}\cite{suko_ipid}.

Assuming the source model that symbols are generated by different context tree sources in each interval, we present an efficient Bayes coding algorithm for it.
In this algorithm, we use the prior probability of context tree models by Matsushima and Hirasawa\cite{matsu1_ctm} and that of parameter change patterns by Suko {\etal}\cite{suko_ipid}.
The proposed algorithm achieves a reduction in computational complexity from the exponential order to the polynomial order.

\section{Non-stationary source that context tree model changes from interval to interval}

In this section, we present a non-stationary source that context tree model changes from interval to interval.
The symbols are generated from different context tree models depending on the interval, as shown in Figure \ref{model_overview}.
\begin{figure}[ht]
  \centering
  \includegraphics[width=0.9\columnwidth, bb=0 0 641 276]{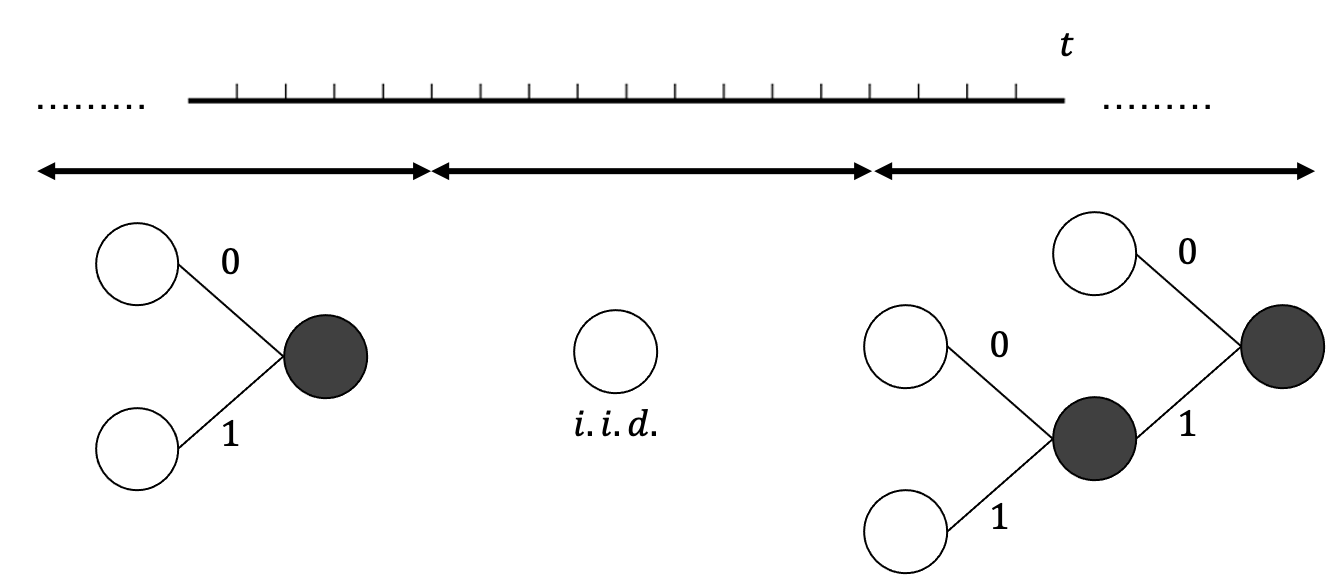}
  \caption{Diagram of a non-stationary source with a context tree model that changes from interval to interval}
  \label{model_overview}
\end{figure}

Now we define the change pattern of context tree models as follows.
\begin{definition}
  The change pattern $c$ is defined to indicate when the context tree model has changed. That is,
  \begin{align} \label{pattern2}
    & c\defeq\left( w_1^{(c)}, \ldots, w_t^{(c)}, \ldots, w_N^{(c)} \right)
    \in\mathcal{C}_N \defeq {\{0,1\}^N}, \\ \nonumber 
    & w_t^{(c)} \defeq
    \begin{cases}
      1 & \mbox{if the context tree model changes at time $t$}, \\
      0 & \mbox{otherwise}.
    \end{cases}
  \end{align}
\end{definition}
Now, for convenience, let $w_1^{(C)}=1$.
The length $N$ of a source sequence is fixed.
The set of all change patterns $\mathcal{C}_N$ is abbreviated as $\mathcal{C}$ from now on.
Next, we define the set of points in the context tree model where changes occur.
\begin{definition}
  Let $\mathcal{T}_c$ denote the set of points at which the parameter changes in the change pattern $c$. That is,
  \begin{gather} \label{time_set2}
    \mathcal{T}_c\defeq\left\{t\,\middle|\,w_t^{(c)}=1\right\} =
    \left\{ t_0^{(c)}, t_1^{(c)}, \ldots,
    t_{\abs{\mathcal{T}_c}-1}^{(c)} \right\}.
  \end{gather}
  where $t_j^{(c)}$ is the $j$-th changing point in the change pattern $c$.
\end{definition}
In other words, there are $\abs{\mathcal{T}_c}-1$ parameter changes in $c$.
For convenience, let $t_0^{(c)}=1$, $t_{\abs{\mathcal{T}_c}}^{(c)}=N+1$.
If the change pattern $c$ is specified in advance, $t_j^{(c)}$ is abbreviated as $t_j$.

From the $j$-th changing point $t_j$ to the $j+1$-th changing point $t_{j+1}$, symbols are generated according to a single context tree model $m_{t_j}^{(c)}$.
The parameter $\btheta^{m_{t_j}^{(c)}}$ for this $m_{t_j}^{(c)}$ is defined as in Definition \ref{parameter_def} and an example is shown in Figure \ref{parameter_setting}.
\begin{definition} \label{parameter_def}
  In the change pattern $c$,
  for the context tree model $m_{t_j}^{(c)}$ in the interval $\left[t_j,t_{j+1}\right)$,
  we denote the set of its leaf nodes as $L_{m_{t_j}}^{(c)}$.
  The parameter $\btheta^{m_{t_j}^{(c)}}$ for $m_{t_j}^{(c)}$ is defined as follows:
  \begin{align}
    \btheta^{m_{t_j}^{(c)}} \defeq \left\{
    \btheta_s \in (0,1)^{\abs{\mathcal{X}}} \,\middle|\,
    s\in L_{m_{t_j}}^{(c)} \right\},
  \end{align}
  where
  \begin{align}
    & \ \btheta_s \defeq {}^\mathsf{T} \hspace{-1mm}
    \left( {\theta_{0|s}}, {\theta_{1|s}}, \ldots,
    \theta_{\abs{\mathcal{X}}-1|s} \right), \\ &
    \sum_{a\in\mathcal{X}} \theta_{a|s} = 1,~~
    \theta_{a|s}\in (0,1)~~\mbox{for each symbol $a$.}
  \end{align}
  Note that $\theta_{a|s}$ is the occurrence probability of $a\in\mathcal{X}$ under the state corresponding to node $s$, where $\mathcal{X}$ denotes a source alphabet.
\end{definition}
\begin{figure}[ht]
  \centering
  \includegraphics[width=0.75\columnwidth, bb=0 0 597 315]{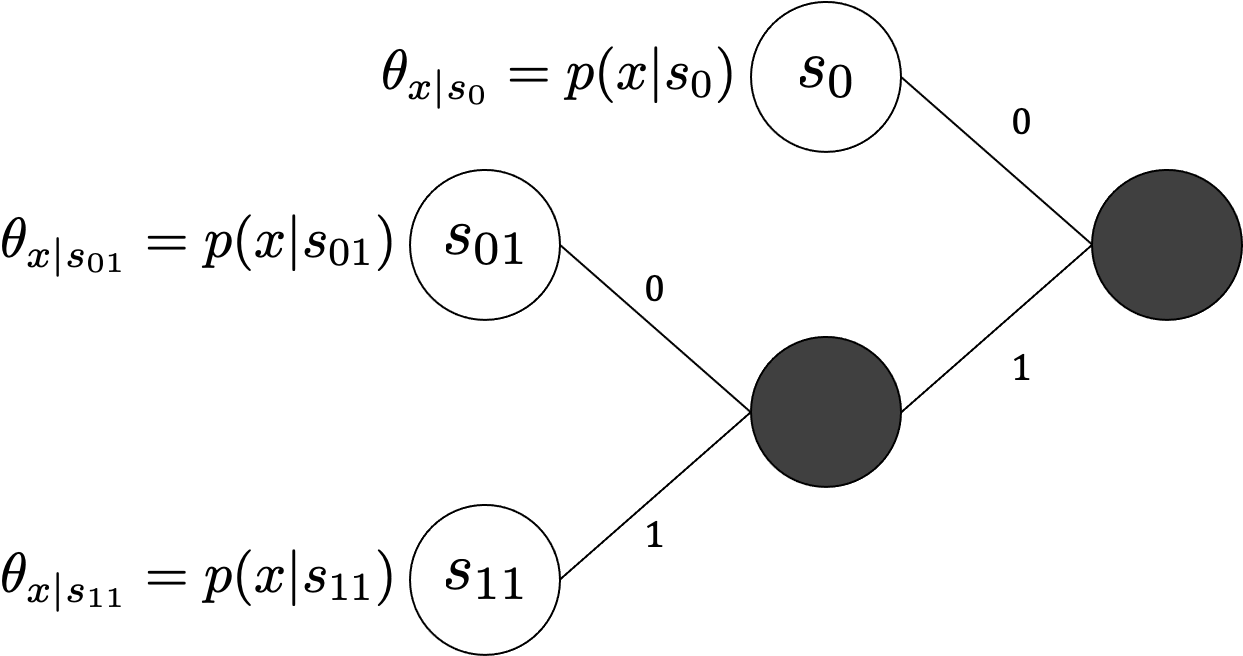}
  \caption{Example of occurrence probability based on context tree model}
  \label{parameter_setting}
\end{figure}
In the case where the change pattern $c$ is specified in advance, $m_{t_j}^{(c)}$ is abbreviated as $m_{t_j}$.

Furthermore, the parameters for the change pattern $c$ are defined as follows.
\begin{definition}
  The parameter $\bm{\Theta}^c$ for the change pattern $c$ is defined as follows.
  \begin{align} \label{parameter_def2}
    &\bm{\Theta}^c \defeq
    \left\{ \btheta^{m_t^{(c)}}\,\middle|\,t\in\mathcal{T}_c \right\} =
    \left\{ {\btheta^{m_{t_0}^{(c)}}},
    {\btheta^{m_{t_1}^{(c)}}}, \ldots,
    \btheta^{m_{t_{\abs{\mathcal{T}_c}-1}}^{(c)}} \right\}.
  \end{align}
\end{definition}

From Definition \ref{parameter_def},
using the state $S_m(x^{t-1})$ corresponding to the source sequence (past context) $x^{t-1}$ in a certain context tree model $m$, the probability of occurrence at time $t\in\left[t_j,t_{j+1}\right)$ is expressed as follows:
\begin{align} \nonumber
  p\left( x_t \middle|
  x^{t-1},\btheta^{m_{t_j}^{(c)}},m_{t_j}^{(c)},c \right) = &
  \,p\left( x_t \middle|
  x_{t_j}^{t-1},\btheta^{m_{t_j}},m_{t_j} \right) \\ \label{parameter2} = &
  \,\theta_{ x_t | S_{m_{t_j}} ( x_{t_j}^{t-1} ) }.
\end{align}
Therefore, the probability distribution of $X^N=X_1\cdots X_N$ in the change pattern $c\in\mathcal{C}$ is expressed by the following equation.
\begin{align} \nonumber
  p\left( x^N\middle|\bm{\Theta}^c,c\right) =&
  \prod_{j=0}^{\abs{\mathcal{T}_c}-1}
  p\left( x_{t_j}^{t_{j+1}-1} \middle| \btheta^{m_{t_j}}, m_{t_j} \right) \\ \nonumber =&
  \prod_{j=0}^{\abs{\mathcal{T}_c}-1} \prod_{t=t_j}^{t_{j+1}-1}
  p\left( x_t \middle| x_{t_j}^{t-1},\btheta^{m_{t_j}},m_{t_j} \right) \\ =&
  \prod_{j=0}^{\abs{\mathcal{T}_c}-1} \prod_{t=t_j}^{t_{j+1}-1}
  {\theta_{ x_t | S_{m_{t_j}} ( x_{t_j}^{t-1} ) }}.
\end{align}
Regarding a change pattern $c$, a context tree model $m_{t_j}^{(c)}$, and the parameter $\btheta^{m_{t_j}^{(c)}}$ for $m_{t_j}^{(c)}$,
we assume prior probability distributions $\pi(c)$, $P(m_{t_j}^{(c)}|c)$,
and $w(\btheta^{m_{t_j}^{(c)}}|m_{t_j}^{(c)},c)$ respectively.

\section{Bayes Code for the Proposed Source} \label{ベイズ符号化}

In this section, we present the coding probability of the Bayes code for the proposed source.
\begin{theorem} \label{bayes_theorem}
  The coding probability of the sequential Bayes code for the proposed source is
  \begin{align} \nonumber
    & \mathrm{AP}^\ast(x_t|x^{t-1}) =
    \sum_{c\in\mathcal{C}} \pi(c|x^{t-1}) \left[
    \sum_{m^{(c)}\in\mathcal{M}} P(m^{(c)}|x^{t-1}, c) \right. \\ \label{bayes_optimal} & ~
    \!\int\! p\!\left( x_t \middle| x^{t-1}\!,\!\btheta^{m^{(c)}}\!,\!m^{(c)}\!,\!c \right)
    \!\left.\! w\!\left( \btheta^{m^{(c)}} \middle| x^{t-1}\!,\! m^{(c)}\!,\! c \right)
    \!d\btheta^{m^{(c)}} \!\right],
  \end{align}
  where $\pi(c|x^{t-1})$ is the posterior probability distribution of the change pattern $c$,
  $P(m^{(c)}|x^{t-1}, c)$ is that of the context tree model $m^{(c)}$,
  $\mathcal{M}$ is the set of all context tree models\footnote{To be more precise, the size of $\mathcal{M}$ (the total number of context tree models) depends on the maximum depth of the context tree $d$.
  However, in this paper, $d$ is fixed.} (see Example \ref{example1}), and
  $w(\btheta^{m^{(c)}}|x^{t-1},m^{(c)},c)$ is the posterior probability distribution of the parameter $\btheta^{m^{(c)}}$.
\end{theorem}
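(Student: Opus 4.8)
The plan is to begin from the standard characterization of the Bayes-optimal coding probability and then unfold it by marginalization. Under the logarithmic loss (codeword length), the coding probability that minimizes the Bayes risk is exactly the posterior predictive distribution $p(x_t \mid x^{t-1})$; this is the optimality of the Bayes code established in \cite{matsu_bayes,matsu1_ctm}, and I would invoke it directly so that $\mathrm{AP}^\ast(x_t \mid x^{t-1}) = p(x_t \mid x^{t-1})$.

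Next I would expand this predictive distribution by the law of total probability, marginalizing jointly over the three latent quantities of the model: the change pattern $c \in \mathcal{C}$, the context tree model $m^{(c)}$ active at time $t$ (ranging over $\mathcal{M}$), and its parameter $\btheta^{m^{(c)}}$. That is,
\begin{align} \nonumber
  p(x_t \mid x^{t-1}) = \sum_{c \in \mathcal{C}} \sum_{m^{(c)} \in \mathcal{M}} \int p\!\left(x_t \middle| x^{t-1}, \btheta^{m^{(c)}}, m^{(c)}, c\right) p\!\left(c, m^{(c)}, \btheta^{m^{(c)}} \middle| x^{t-1}\right) d\btheta^{m^{(c)}}.
\end{align}
The key step is then to factor the joint posterior by the chain rule of conditional probability,
\begin{align} \nonumber
  p\!\left(c, m^{(c)}, \btheta^{m^{(c)}} \middle| x^{t-1}\right) = \pi(c \mid x^{t-1})\, P(m^{(c)} \mid x^{t-1}, c)\, w\!\left(\btheta^{m^{(c)}} \middle| x^{t-1}, m^{(c)}, c\right),
\end{align}
where each factor is the posterior corresponding to the priors $\pi(c)$, $P(m_{t_j}^{(c)} \mid c)$, and $w(\btheta^{m_{t_j}^{(c)}} \mid m_{t_j}^{(c)}, c)$ assumed at the end of Section II. Substituting this factorization and moving the change-pattern posterior and the model posterior outside the parameter integral yields the nested bracketed form of \eqref{bayes_optimal}.

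The main obstacle I anticipate is bookkeeping about which model and parameter are \emph{active} at time $t$, rather than any analytic difficulty. Because the likelihood in \eqref{parameter2} depends only on the parameters of the single model $m_{t_j}$ governing the interval $\left[t_j, t_{j+1}\right)$ that contains $t$, I must be careful that the generic symbol $m^{(c)}$ in the statement really denotes this active model and that marginalizing over the remaining (inactive) models and parameters leaves this factor untouched. Verifying that the posterior factorization is consistent with the interval-wise conditional independence structure induced by $c$, so that conditioning on $x^{t-1}$ and $c$ correctly localizes to the current interval, is the one place where care is needed; the remainder is a direct rearrangement of the sums and the integral.
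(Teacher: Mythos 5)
Your proposal is correct and follows essentially the same route as the paper: the paper likewise identifies $\mathrm{AP}^\ast(x_t|x^{t-1})$ as the posterior predictive $p(x^t)/p(x^{t-1})$ (via the Bayes-risk minimization outline borrowed from Theorem 2 of \cite{matsu_bayes} and the telescoping factorization of $p(x^N)$) and then rewrites it as the posterior-weighted mixture over $c$, $m^{(c)}$, and $\btheta^{m^{(c)}}$. The only difference is that you invoke the optimality of the posterior predictive as a citation while the paper writes out the loss, risk, and Bayes risk explicitly before doing the same marginalization.
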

\begin{proof}
  The proof outline of Theorem \ref{bayes_theorem} is the same as that of Theorem 2 in \cite{matsu_bayes}.
  Now, let $\mathrm{AP}_p(x_t|x^{t-1})$ be an arbitrary sequential coding probability.
  Taking the logarithmic loss of the coding probability, the loss function is as follows:
  \begin{align} \nonumber
    & V\left( \mathrm{AP}_p, x^N, \btheta^{m^{(c)}}, m^{(c)}, c \right) \\ & =
    \log p\left( x^N \middle| \btheta^{m^{(c)}}, m^{(c)}, c \right) -
    \log \prod_{t=1}^N \mathrm{AP}_p(x_t|x^{t-1}).
  \end{align}
  We take the expectation with respect to the probability distribution of the source sequences, and obtain the risk function as follows:
  \begin{align} \nonumber
    & R\left( \mathrm{AP}_p, \btheta^{m^{(c)}}, m^{(c)}, c \right) \\ & =
    \sum_{x^N} p\left( x^N \middle| \btheta^{m^{(c)}}, m^{(c)}, c \right)
    \log \frac{ p\left( x^N \middle| \btheta^{m^{(c)}}, m^{(c)}, c \right) }
    { \prod_{t=1}^N \mathrm{AP}_p(x_t|x^{t-1}) }.
  \end{align}
  The Bayes risk is then obtained by taking the expectation with respect to the probability distribution of each parameter.
  \begin{align} \nonumber
    & \mathrm{BR}(\mathrm{AP}_p) =
    \sum_{c\in\mathcal{C}} \pi(c) \left[
    \sum_{m^{(c)}\in\mathcal{M}} P(m^{(c)}|c) \right. \\ & \hspace{3mm} \left.
    \int R\left( \mathrm{AP}_p, \btheta^{m^{(c)}}, m^{(c)}, c \right)
    w(\btheta^{m^{(c)}} | m^{(c)}, c) d\btheta^{m^{(c)}} \right].
  \end{align}
  On the other hand, we have
  \begin{align} \nonumber
    & p(x^N) \\ & = \nonumber
    \sum_{c\in\mathcal{C}} \sum_{m^{(c)}\in\mathcal{M}} \int
    p(\btheta^{m^{(c)}}, m^{(c)}, c)
    p(x^N | \btheta^{m^{(c)}}, m^{(c)}, c) d\btheta^{m^{(c)}} \\ & =
    \prod_{t=1}^N \frac{ \sum_{c} \sum_{m^{(c)}} \! \int \!
    p(\btheta^{m^{(c)}} \! , \! m^{(c)} \! , \! c)
    p(x^t | \btheta^{m^{(c)}} \! , \! m^{(c)} \! , \! c)
    d\btheta^{m^{(c)}} }
    { \sum_{c} \sum_{m^{(c)}} \! \int \!
    p(\btheta^{m^{(c)}} \! , \! m^{(c)} \! , \! c)
    p(x^{t-1} \! | \btheta^{m^{(c)}} \! , \! m^{(c)} \! , \! c)
    d\btheta^{m^{(c)}} }.
  \end{align}
  Hence, $\mathrm{AP}^\ast(x_t|x^{t-1})$ given as follows minimizes the Bayes risk.
  \begin{align*}
    & \mathrm{AP}^\ast(x_t|x^{t-1}) \\ & =
    \frac{ \sum_{c} \sum_{m^{(c)}} \int
    p(\btheta^{m^{(c)}} \! , \! m^{(c)} \! , \! c)
    p(x^t \! | \btheta^{m^{(c)}} \! , \! m^{(c)} \! , \! c)
    d\btheta^{m^{(c)}} }
    { \sum_{c} \sum_{m^{(c)}} \int
    p(\btheta^{m^{(c)}} \! , \! m^{(c)} \! , \! c)
    p(x^{t-1} \! | \btheta^{m^{(c)}} \! , \! m^{(c)} \! , \! c)
    d\btheta^{m^{(c)}} } \\ & = \nonumber
    \sum_{c\in\mathcal{C}} \pi(c|x^{t-1}) \left[
    \sum_{m^{(c)}\in\mathcal{M}} P(m^{(c)}|x^{t-1}, c) \right.
  \end{align*}
  \vspace{-5mm}
  \begin{align}
    ~ \int p\!\left( x_t \middle| x^{t-1} \! , \! \btheta^{m^{(c)}} \! , \! m^{(c)} \! , \! c \right)
    \left. \!
    w\!\left( \btheta^{m^{(c)}} \! \middle| x^{t-1} \! , \! m^{(c)} \! , \! c \right)
    \! d\btheta^{m^{(c)}} \! \right].
  \end{align}
\end{proof}
\begin{example} \label{example1}
  When considering the case where $\mathcal{X}=\{ 0,1\}$ and the depth of a context tree model is at most two, there are five context tree models as shown in Figure \ref{models} and $\mathcal{M} = \{m_1,m_2,m_3,m_4,m_5\}$.
\end{example}
\begin{figure}[ht]
  \centering
  \includegraphics[width=0.8\columnwidth, bb=0 0 519 415]{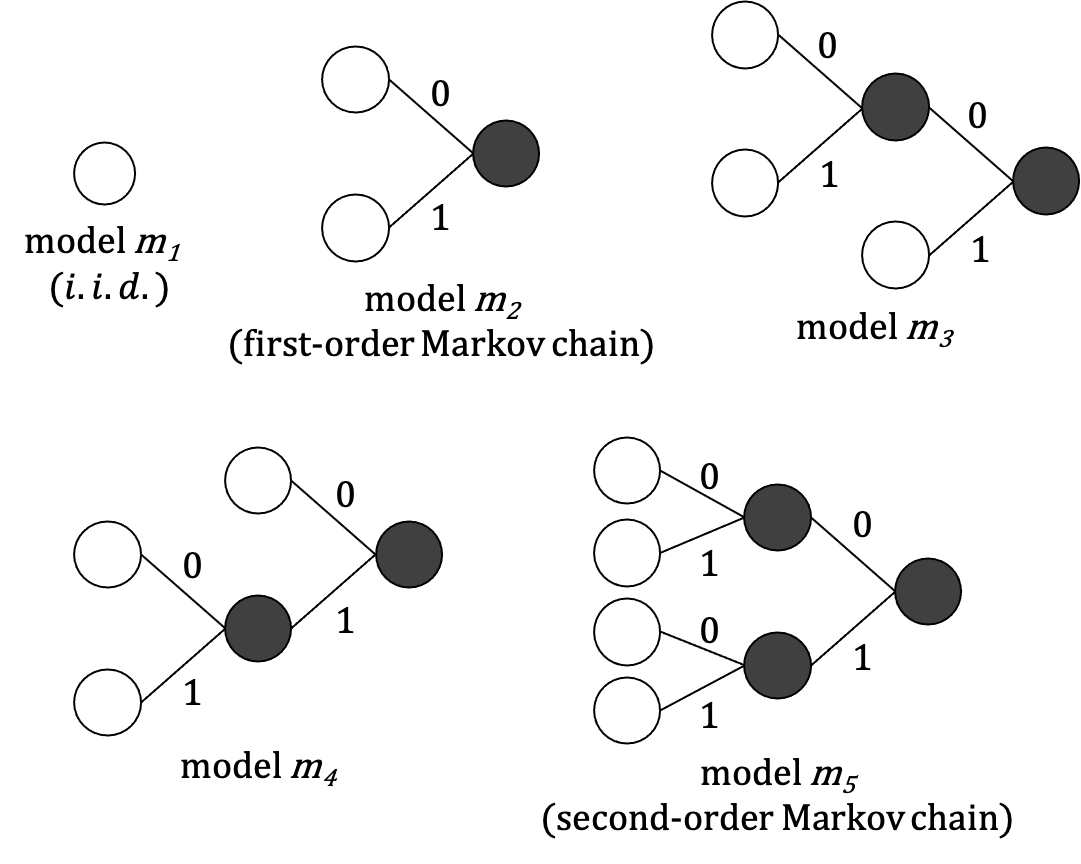}
  \caption{Overall view of the context tree models of $0$-$1$ sequence in the maximum depth $d=2$}
  \label{models}
\end{figure}

\section{Efficient algorithm for calculating coding probability of the Bayes code} \label{algorithm}

The coding probability of the Bayes code shown in Theorem \ref{bayes_theorem} is given by weighting the Bayes optimal coding probability of each change pattern $c$ by the posterior probability distribution of the change pattern $\pi(c|x^{t-1})$.
That is, \eqref{bayes_optimal} is expressed as follows:
\begin{align} \label{bayes_optimal2}
  \mathrm{AP}^\ast(x_t|x^{t-1}) =
  \sum_{c\in\mathcal{C}} \pi(c|x^{t-1}) \mathrm{AP}_c(x_t|x^{t-1}, c),
\end{align}
where
\begin{align} \nonumber
  & \mathrm{AP}_c(x_t|x^{t-1}, c) =
  \sum_{m^{(c)}\in\mathcal{M}} P\left(m^{(c)}\middle|x^{t-1}, c\right) \\ & ~
  \int p\!\left( x_t \middle| x^{t-1} \! , \! \btheta^{m^{(c)}} \! , \! m^{(c)} \! , \! c \right)
  w\!\left( \btheta^{m^{(c)}} \! \middle| x^{t-1} \! , \! m^{(c)} \! , \! c \right)
  \! d\btheta^{m^{(c)}}.
\end{align}

For this $\mathrm{AP}_c(x_t|x^{t-1}, c)$, Matsushima and Hirasawa\cite{matsu1_ctm} have already shown an algorithm that can calculate it analytically while reducing the amount of computation.
This algorithm is explained in the next subsection.

\subsection{Efficient Bayes Coding Algorithm for Fixed Change Pattern}

First, the prior probability distribution for the context tree model is assumed to be as follows.
\begin{assumption} \label{g_s prior}
  For the set of leaf nodes $L_m$ in each context tree model $m\in\mathcal{M}$, let $I_m$ be the set of internal nodes.
  Assume that each node $s$ has a hyper-parameter $g_s\in[0,1]$ and that the prior distribution of each model $m$ is
  \begin{align} \label{note2}
    P(m) = \prod_{\bar{s}\in I_m} g_{\bar{s}}
    \prod_{s\in L_m} (1-g_s),
  \end{align}
  where $g_s=0$ for the leaf node $s$ at the maximum depth of a context tree model. An example is shown in Figure~\ref{hyper_parameter}.
\end{assumption}
\begin{remark}
  $P(m)$ is a probability distribution, \ie\ \eqref{note2} satisfies
  \begin{align}
    \sum_{m\in\mathcal{M}} P(m) = 1.
  \end{align}
  The proof of this fact is given by Nakahara and Matsushima\cite{nakahara}.
\end{remark}
For example, the prior probability for Figure \ref{hyper_parameter} (corresponding to model $m_4$ in Figure \ref{models}) is
\begin{align} \nonumber
  P(m_4) = & g_{s_\lambda}(1-g_{s_0})g_{s_1}
  (1-\underbrace{g_{s_{01}}}_{0})
  (1-\underbrace{g_{s_{11}}}_{0}) \\ = &
  g_{s_\lambda}(1-g_{s_0})g_{s_1},
\end{align}
where $s_\lambda$ represents the root node.
\begin{figure}[ht]
  \centering
  \includegraphics[width=0.7\columnwidth, bb=0 0 430 311]{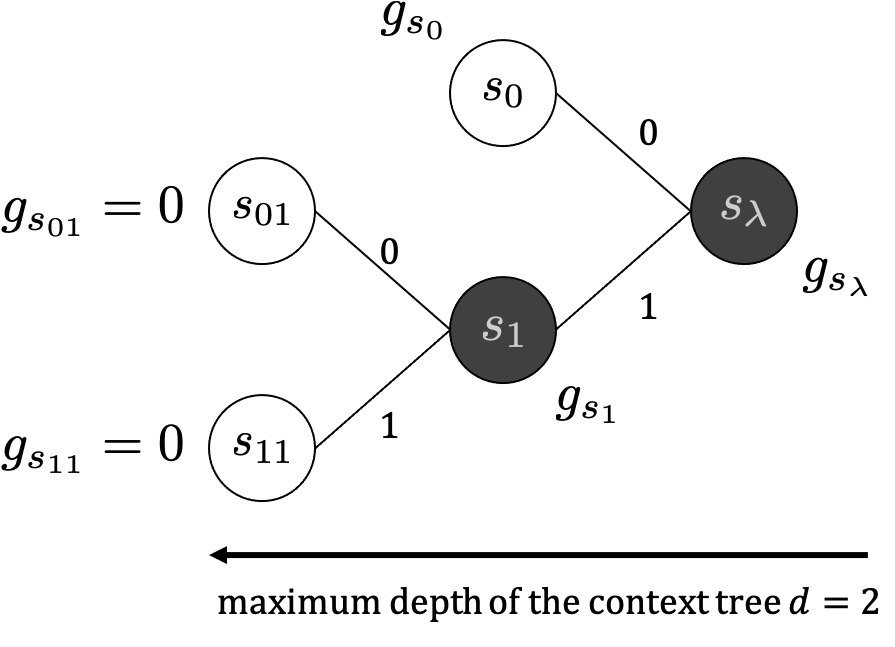}
  \caption{Example of a context tree model. The $s_\lambda$ represents the root node, and the internal nodes are $s_\lambda$ and $s_1$. The leaf nodes are $s_0,s_{01},s_{11}$, but the hyper-parameters of $s_{01},s_{11}$ are $0$ because they exist at the maximum depth.}
  \label{hyper_parameter}
\end{figure}

Second, we assume the prior probability distribution of the parameter $\btheta^m$ for each context tree model $m\in\mathcal{M}$ as follows.
\begin{assumption} \label{parameter_prior}
  For each leaf node $s\in L_m$, we assume the prior probability distribution of its parameter $w(\btheta_s)$ is a Dirichlet distribution
  \begin{align}
    w(\btheta_s) = \frac{ \Gamma\left(
    \sum_{i=0}^{\abs{\mathcal{X}}-1} \beta(i|s)\right) }
    { \prod_{i=0}^{\abs{\mathcal{X}}-1} \Gamma\left(\beta(i|s)\right) }
    \prod_{i=0}^{\abs{\mathcal{X}}-1} \theta_{i|s}^{\beta(i|s)-1},
  \end{align}
  where $\Gamma(\cdot)$ is a Gamma function, $\mathcal{X}$ denotes a source alphabet, and $\beta(i|s)$ denotes the parameter of the Dirichlet distribution.
  In addition, the prior probability distribution of $\btheta^m$ is assume to be the product of all $w(\btheta_s)$. That is,
  \begin{align}
    w\left( \btheta^m \middle| m \right) = \prod_{s\in L_m}
    w(\btheta_s).
  \end{align}
\end{assumption}

Then, Matsushima and Hirasawa\cite{matsu1_ctm} recursively compute the Bayes coding probability for context tree sources as follows.

First, let $\mathcal{L}$ denote the set of all leaf nodes in the \textit{superposed context tree}.
The term ``superposed context tree'' refers to the tree structure that represents the superposition of the entire possible context tree models.
For example, a superposed context tree for the entire context tree model shown in Figure \ref{models} is as shown in Figure \ref{superposed}.
\begin{figure}[ht]
  \centering
  \includegraphics[width=0.5\columnwidth, bb=0 0 315 268]{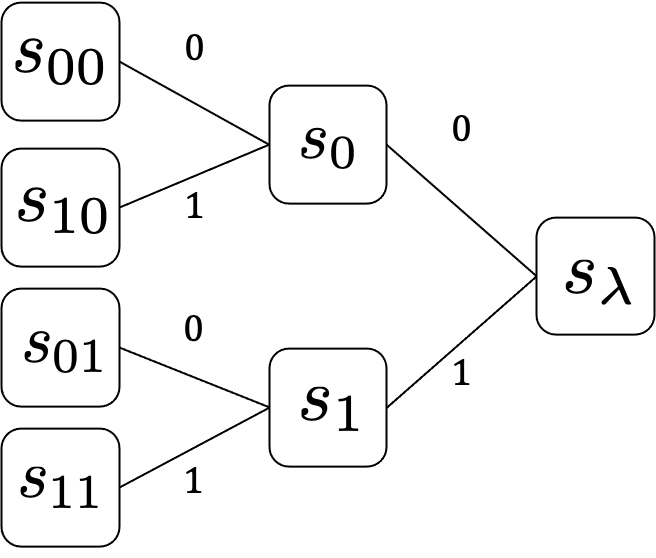}
  \caption{A superposed context tree for the entire context tree model in Figure \ref{models}}
  \label{superposed}
\end{figure}

Now, we denote $\tau_t$ as the last change point of context tree model when the time is $t$.
If $\tau_t$ is the $j$-th changing point in the sequence $x^{t-1}$, $x_t$ is generated according to the context tree model $m_{t_j}$.
In other words, since the context tree models before $\tau_t$ is irrelevant to $x_t$.
Next, a recursive function that calculates coding probability is defined as follows.
\begin{definition}
  \begin{align} \nonumber
    & \tilde{q}_s(x_t|x_{\tau_t}\cdots x_{t-1}) \\ \label{recur} & \defeq
    \begin{cases}
      q_s(x_t|x_{\tau_t}\cdots x_{t-1}) & \mbox{if}~~s\in \mathcal{L}, \\
      ( 1-g_{s|x_{\tau_t}\cdots x_{t-1}} ) q_s(x_t|x_{\tau_t}\cdots x_{t-1}) \\ +~
      g_{s|x_{\tau_t}\cdots x_{t-1}} \tilde{q}_{s_{\mbox{\tiny{child}}}}(x_t|x_{\tau_t}\cdots x_{t-1}) &
      \mbox{otherwise},
    \end{cases}
  \end{align}
  where
  \begin{align}
    q_s(x_t|x_{\tau_t}\cdots x_{t-1}) \defeq \frac{ \beta(x_t|s) + N(x_t|x_{\tau_t}\cdots x_{t-1},s) }
    { \sum_{i=0}^{\abs{\mathcal{X}}-1}
    \left\{ \beta(i|s)+N(i|x_{\tau_t}\cdots x_{t-1},s) \right\} }.
  \end{align}
  Note that $N(\mathrm{a}|x_{\tau_t}\cdots x_{t-1},s)$ denotes the number of occurrences of the symbol $\mathrm{a}\in\mathcal{X}$ under the state $s$ in the subsequence $x_{\tau_t}\cdots x_{t-1}$,
  and $s_{\mbox{\tiny{child}}}$ is the child node of $s$ in the superposed context tree with the context $x_{\tau_t}\cdots x_{t-1}$.
  The posterior hyper-parameter $g_{s|x_{\tau_t}\cdots x_{t-1}}$ is calculated as follows:
  \begin{align} \label{child}
    g_{s|x_{\tau_t}\cdots x_{t}} \defeq
    \begin{cases}
      g_s & \mbox{if $t=0$,} \\
      \frac{ g_{s|x_{\tau_t}\cdots x_{t-1}}
      \tilde{q}_{s_{\mbox{\tiny{child}}}}(x_t|x_{\tau_t}\cdots x_{t-1}) }
      { \tilde{q}_s(x_t|x_{\tau_t}\cdots x_{t-1}) } & \mbox{otherwise.}
    \end{cases}
  \end{align}
\end{definition}

In this case, the Bayes coding probability $\mathrm{AP}_c(x_t|x^{t-1}, c)$ for the context tree source can be calculated as follows.
\begin{theorem}[Matsushima and Hirasawa\cite{matsu1_ctm}]
  \begin{align}
    \mathrm{AP}_c(x_t|x^{t-1}, c) = \tilde{q}_{s_\lambda}(x_t|x_{\tau_t}\cdots x_{t-1}).
  \end{align}
\end{theorem}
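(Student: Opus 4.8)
The plan is to reduce the weighted sum defining $\mathrm{AP}_c(x_t|x^{t-1},c)$ to a ratio of two tree-structured marginal likelihoods, and then to prove by induction on the superposed context tree that the recursion \eqref{recur} computes exactly this ratio at the root $s_\lambda$. First I would dispose of the parameter integral. Because the change pattern fixes the last change point $\tau_t$, the symbol $x_t$ depends only on the state $S_m(x_{\tau_t}\cdots x_{t-1})$ and on the Dirichlet-distributed parameter attached to the single leaf of $m$ into which that context falls. By the Dirichlet--multinomial conjugacy of Assumption~\ref{parameter_prior}, the predictive integral collapses, for each model $m$, to $q_{S_m(x^{t-1})}(x_t|x_{\tau_t}\cdots x_{t-1})$ with $q_s$ as defined. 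Hence $\mathrm{AP}_c=\sum_{m}P(m|x^{t-1},c)\,q_{S_m(x^{t-1})}$, and rewriting the posterior via Bayes' rule gives the ratio $\bigl(\sum_m P(m)\,p(x_{\tau_t}\cdots x_t|m)\bigr)\big/\bigl(\sum_m P(m)\,p(x_{\tau_t}\cdots x_{t-1}|m)\bigr)$, where $p(\cdot|m)$ is the Bayes marginal over $\btheta^m$. This step is routine conjugacy bookkeeping.

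Next I would exploit the factorized prior of Assumption~\ref{g_s prior}, together with the facts that the model likelihood factorizes over leaves and that $\mathcal{M}$ is exactly the set of prunings (cuts) of the superposed tree. Define, for each node $s$, a weighted marginal $P_s(\cdot)$ recursively: at a leaf it is the Dirichlet--multinomial marginal of the counts routed through $s$; at an internal node it is $(1-g_s)$ times that leaf marginal plus $g_s$ times the product of $P_{s'}$ over the children $s'$. A short induction from the leaves up to the root then yields $\sum_m P(m)\,p(\cdot|m)=P_{s_\lambda}(\cdot)$, so that $\mathrm{AP}_c=P_{s_\lambda}(x_{\tau_t}\cdots x_t)/P_{s_\lambda}(x_{\tau_t}\cdots x_{t-1})$.

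The core of the argument, and the step I expect to be the main obstacle, is to show that $\tilde{q}_s$ as defined in \eqref{recur} computes precisely the ratio $P_s(x_{\tau_t}\cdots x_t)/P_s(x_{\tau_t}\cdots x_{t-1})$, and that the posterior hyper-parameter $g_{s|x_{\tau_t}\cdots x_{t-1}}$ equals the posterior mixing weight $g_s\prod_{s'}P_{s'}(\text{past})/P_s(\text{past})$. I would prove both claims simultaneously by induction from the leaves to the root. The delicate point is that although $P_s$ at an internal node multiplies over \emph{all} children, only the single child $s_{\mathrm{child}}$ lying on the context path of $x_{\tau_t}\cdots x_{t-1}$ receives the new symbol $x_t$; the subtree counts of the other children are identical at times $t-1$ and $t$, so their $P_{s'}$ factors cancel in the ratio. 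This is exactly why the recursion descends into a single child rather than all of them, and it is what makes the algorithm efficient.

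Carrying out the algebra, the internal-node ratio splits as $(1-g_s)\,\frac{P_s^{\mathrm{leaf}}(\text{past})}{P_s(\text{past})}\,q_s+g_s\,\frac{\prod_{s'}P_{s'}(\text{past})}{P_s(\text{past})}\,\tilde{q}_{s_{\mathrm{child}}}$; the two coefficients are precisely $1-g_{s|\cdots}$ and $g_{s|\cdots}$, since they sum to one by the defining relation of $P_s$, and the expression reduces to $(1-g_{s|\cdots})\,q_s+g_{s|\cdots}\,\tilde{q}_{s_{\mathrm{child}}}$, matching \eqref{recur}. The same cancellation, applied across successive time steps, yields the multiplicative update \eqref{child} for the posterior weight. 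Evaluating at $s=s_\lambda$ then gives $\tilde{q}_{s_\lambda}=\mathrm{AP}_c(x_t|x^{t-1},c)$, completing the proof.
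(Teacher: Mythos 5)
The paper itself offers no proof of this theorem: it is imported from Matsushima and Hirasawa \cite{matsu1_ctm} and stated as a cited result, so there is nothing in the paper to compare your argument against line by line. That said, your proposal is correct and is essentially the standard derivation for this family of results. The three stages are all sound: (i) the Dirichlet--multinomial conjugacy of Assumption \ref{parameter_prior} collapses the predictive integral for each model $m$ to $q_{S_m(x^{t-1})}(x_t|x_{\tau_t}\cdots x_{t-1})$ and turns $\mathrm{AP}_c$ into a ratio of mixture marginals; (ii) the factorized prior of Assumption \ref{g_s prior}, together with the fact that $\mathcal{M}$ is the set of prunings of the superposed tree, lets the distributive law convert $\sum_m P(m)\,p(\cdot|m)$ into the leaf-to-root recursion $P_s$; and (iii) the one-step ratio at the root reduces to the recursion \eqref{recur}. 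The point you single out as the crux really is the crux: only the child on the context path of $x_{\tau_t}\cdots x_{t-1}$ has its counts updated between times $t-1$ and $t$, so the factors of all other children cancel in the ratio, which is why the recursion descends along a single path and costs $\mathcal{O}(d)$ rather than visiting the whole tree. Your identification of the two mixing coefficients with $1-g_{s|x_{\tau_t}\cdots x_{t-1}}$ and $g_{s|x_{\tau_t}\cdots x_{t-1}}$, and the observation that the same cancellation across successive time steps produces the multiplicative update \eqref{child}, correctly closes the induction; the only detail worth making explicit in a full write-up is the base case $g_{s|x_{\tau_t}\cdots x_{t}}=g_s$ at $t=0$, which matches the definition in \eqref{child}.
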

%
%
\subsection{Efficient Bayes Coding Algorithm for the Proposed Source}

In the previous subsection, we described that the algorithm by Matsushima and Hirasawa\cite{matsu1_ctm} reduces the computational complexity of calculating $\mathrm{AP}_c(x_t|x^{t-1}, c)$ for each change pattern $c$ from $\mathcal{O}(2^{\abs{\mathcal{X}}^{d-1}})$ to $\mathcal{O}(d)$.
However, \eqref{bayes_optimal2} for the proposed source requires a computational effort of $\mathcal{O}(d\cdot2^N)$ because the total number of change patterns is $\abs{\mathcal{C}}=2^N$ for the length $N$ of the source sequence.

In this subsection, we propose an algorithm to reduce the computational complexity to $\mathcal{O}(d\cdot N^2)$ using a class of prior probability distributions of change patterns by Suko {\etal}\cite{suko_ipid}.
In their proposal of efficient Bayes coding for {\ipid} sources, they assumed a class that follows a Bernoulli distribution for the pattern of parameter changes.
In this paper, we assume a similar class for the change pattern $c$.
\begin{definition} \label{bernoulli}
  For the change pattern $c=( w_1^{(c)},\ldots,w_t^{(c)},\ldots,w_N^{(c)} )$, we assume that each $w_t^{(c)}$ (except $t=1$) independently follows Bernoulli distribution $\mathrm{Ber}(\alpha)$ for each $w_t^{(c)}$ (except $t=1$). That is,
  \begin{align}
    \pi(c) \defeq \alpha^{\abs{\mathcal{T}_c}-1}
    { (1-\alpha)^{N-\abs{\mathcal{T}_c}} }.
  \end{align}
\end{definition}
In addition, Suko {\etal}\cite{suko_ipid} introduced a prior probability distribution of $\tau_t$, and proposed an efficient Bayes code for {\ipid} sources.
In the same way, we define the prior probability distribution of the last change point $v(\tau_t)$ as follows:
\begin{align}
  v(\tau_t) \defeq \sum_{c:\tau_t\in\mathcal{T}_c} \pi(c), ~
  \mbox{where} ~ \tau_t = 1,2,\ldots,t.
\end{align}

Finally, the efficient Bayes coding algorithm for the proposed model in this paper is shown below.
\begin{screen}
  \begin{enumerate}[Step i.]
    \item Load $x_t$. \label{step1}
    \item The coding probability is calculated as follows:
          \begin{align*} \label{calculation} \hspace{-1.2cm}
            \tilde{p}(x_t|x^{t-1}) = \sum_{\tau_t=1}^t
            \tilde{q}_{s_\lambda}(x_t|x_{\tau_t}\cdots x_{t-1}) v(\tau_t|x^{t-1}).
          \end{align*}
    \item $v(\tau_{t+1}|x^t)$ is calculated as follows:
          \begin{itemize}
            \setlength{\leftskip}{-1.2cm}
            \item If $\tau_{t+1}=1,2,\ldots,t$,
                  \begin{align*} \hspace{-1.2cm}
                    v(\tau_{t+1}|x^t) = (1-\alpha)
                    \frac{\tilde{q}_{s_\lambda}(x_t|x_{\tau_t}\cdots x_{t-1})v(\tau_t|x^{t-1})}
                    {\tilde{p}(x_t|x^{t-1})}.
                  \end{align*}
            \item If $\tau_{t+1}=t+1$, $v(\tau_{t+1}|x^t)=\alpha$.
          \end{itemize}
    \item Back to Step \ref{step1}.
  \end{enumerate}
\end{screen}

We show that the above algorithm correctly computes the Bayes coding probability for the proposed source.
\begin{proof}
  (Outline only.)
  \begin{align} \nonumber
    \mathrm{AP}^\ast(x_t|x^{t-1}) = &
    \sum_{c\in\mathcal{C}} \pi(c|x^{t-1}) \mathrm{AP}_c(x_t|x^{t-1}, c) \\ \nonumber = &
    \sum_{c\in\mathcal{C}} \pi(c|x^{t-1}) \tilde{q}_{s_\lambda}(x_t|\tau_t, x^{t-1}) \\ \nonumber = &
    \sum_{\tau_t=1}^t \left\{ \tilde{q}_{s_\lambda}(x_t|\tau_t, x^{t-1})
    \sum_{c:\tau_t\in\mathcal{T}_c} \pi(c|x^{t-1}) \right\} \\ \nonumber = &
    \sum_{\tau_t=1}^t \tilde{q}_{s_\lambda}(x_t|\tau_t, x^{t-1}) v(\tau_t|x^{t-1}) \\ = &
    \ \tilde{p}(x_t|x^{t-1}).
  \end{align}
\end{proof}

\section{Experiment} \label{experiment}

We performed an experiment of running the proposed algorithm on an artificially generated source sequence.
The purpose of this experiment is to check the compression performance of the proposed algorithm for several settings of hyper-parameter $\alpha$ (see Definition \ref{bernoulli}).

First, we describe the source sequence we used in the experiment.
The length of the sequence is $300$, and as shown in Figure \ref{source}, each of the $100$ consecutive symbols is generated from a different context tree model.
\begin{figure}[ht]
  \centering
  \includegraphics[width=1.0\columnwidth, bb=0 0 1017 279]{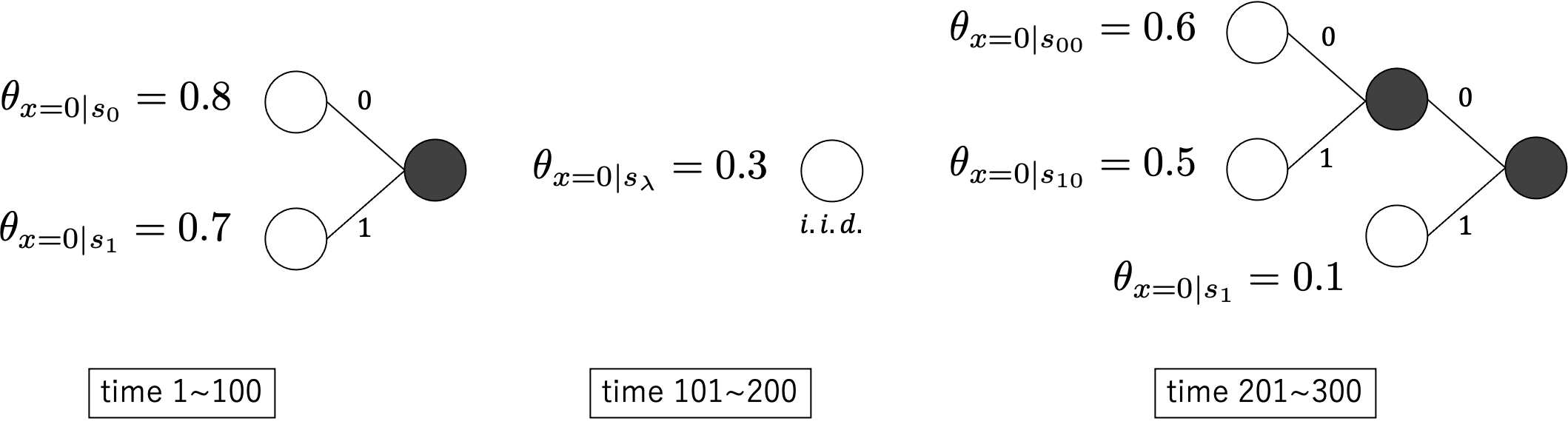}
  \caption{Context tree models that generate the source sequence we used in the experiment. Note that $\theta_{x=1|s_\lambda}$ is given by
  $1-\theta_{x=0|s_\lambda}$.}
  \label{source}
\end{figure}
Since the context tree model of the source sequence changes every $100$ symbols,
the appropriate hyper-parameter (the parameter of the Bernoulli distribution in Definition \ref{bernoulli}) $\alpha$ is $0.01$.
Therefore, in order to confirm the compression performance for the values of $\alpha$, we conducted the following Experiment A.

\hspace{-5mm} \underline{Experiment A} \vspace{1mm}

In Experiment A, we observe the redundancy
\begin{align}
  \log_2 \frac{1}{\tilde{p}(x_t|x^{t-1})} - H\left(X_{\tau_t}^{t}\right)
\end{align}
of the proposed algorithm, where $H(\cdot)$ denotes entropy rate.
We set the hyper-parameter $\alpha$ with three values: $\alpha=0.1$, $0.01$, and $0.001$.
We ran the algorithm $10000$ times with each value of $\alpha$.
In the algorithm, we set $g_s=0.5$ (see Assumption \ref{g_s prior}) and $\beta(0|s)=\beta(1|s)=0.5$ (see Assumption \ref{parameter_prior}).
The average redundancy of the $10000$ times is taken and the result is shown in Figure \ref{redundancy}.
\begin{figure}[ht]
  \centering
  \includegraphics[width=0.9\columnwidth, bb=0 0 499 416]{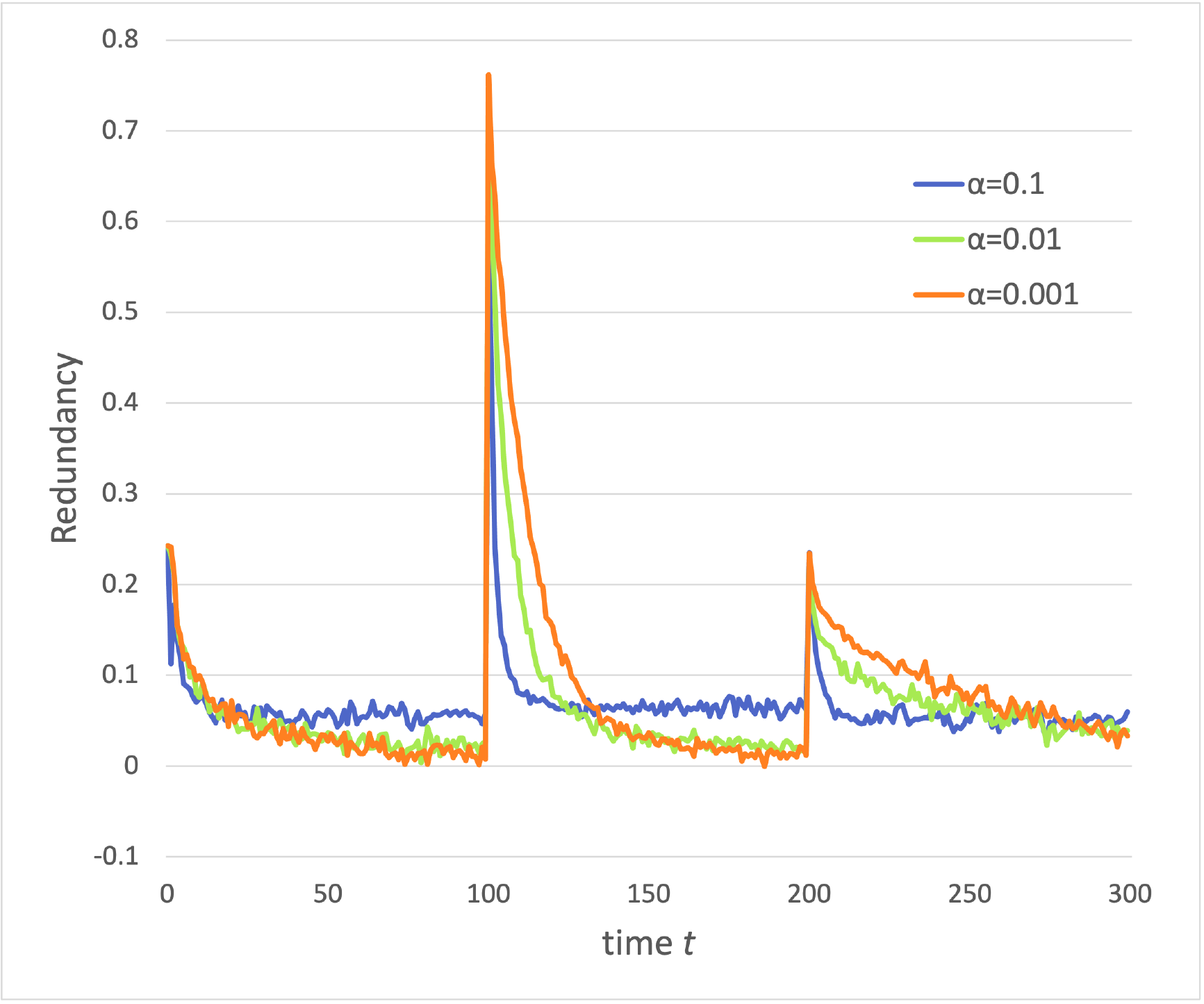}
  \caption{Average redundancy in the $10000$ times trial}
  \label{redundancy}
\end{figure}

The redundancy jumps up instantly at each changing point ($t=101, 201$), but decreases gradually in any values of $\alpha$.
\begin{description}
  \item[($\alpha=0.1$ vs. $\alpha=0.01$)]\mbox{}\\
  When $\alpha=0.1$ the redundancy decreases more rapidly.
  However, it converges to a smaller value when $\alpha=0.01$ at the end of each interval.
  \item[($\alpha=0.01$ vs. $\alpha=0.001$)]\mbox{}\\
  When $\alpha=0.01$, the redundancy decreases more rapidly.
  Moreover, there is no slight difference of the convergence value at the end of each interval.
\end{description}
Therefore, it can be concluded that $\alpha=0.01$ seems the best for compression performance.

In the Bayes code, we can observe the posterior probability of parameters.
For example, the posterior probability of $\tau_t$ indicates the characteristics of context tree model changes.
Next, we conducted the following Experiment B.

\hspace{-5mm} \underline{Experiment B} \vspace{1mm}

The $\tau_t$ with the largest posterior probability $v(\tau_t|x^{t-1})$ is regarded as the estimate of the last change point from $t$, and we observe the transition of the estimate.
As same as Experiment A, we set the three hyper-parameter $\alpha=0.1$, $0.01$, and $0.001$. We ran the algorithm $10000$ times with each hyper-parameter.
The average of the estimate of the $10000$ times is taken and the result is shown in Figure \ref{estimation}.
\begin{figure}[ht]
  \centering
  \includegraphics[width=0.9\columnwidth, bb=0 0 499 416]{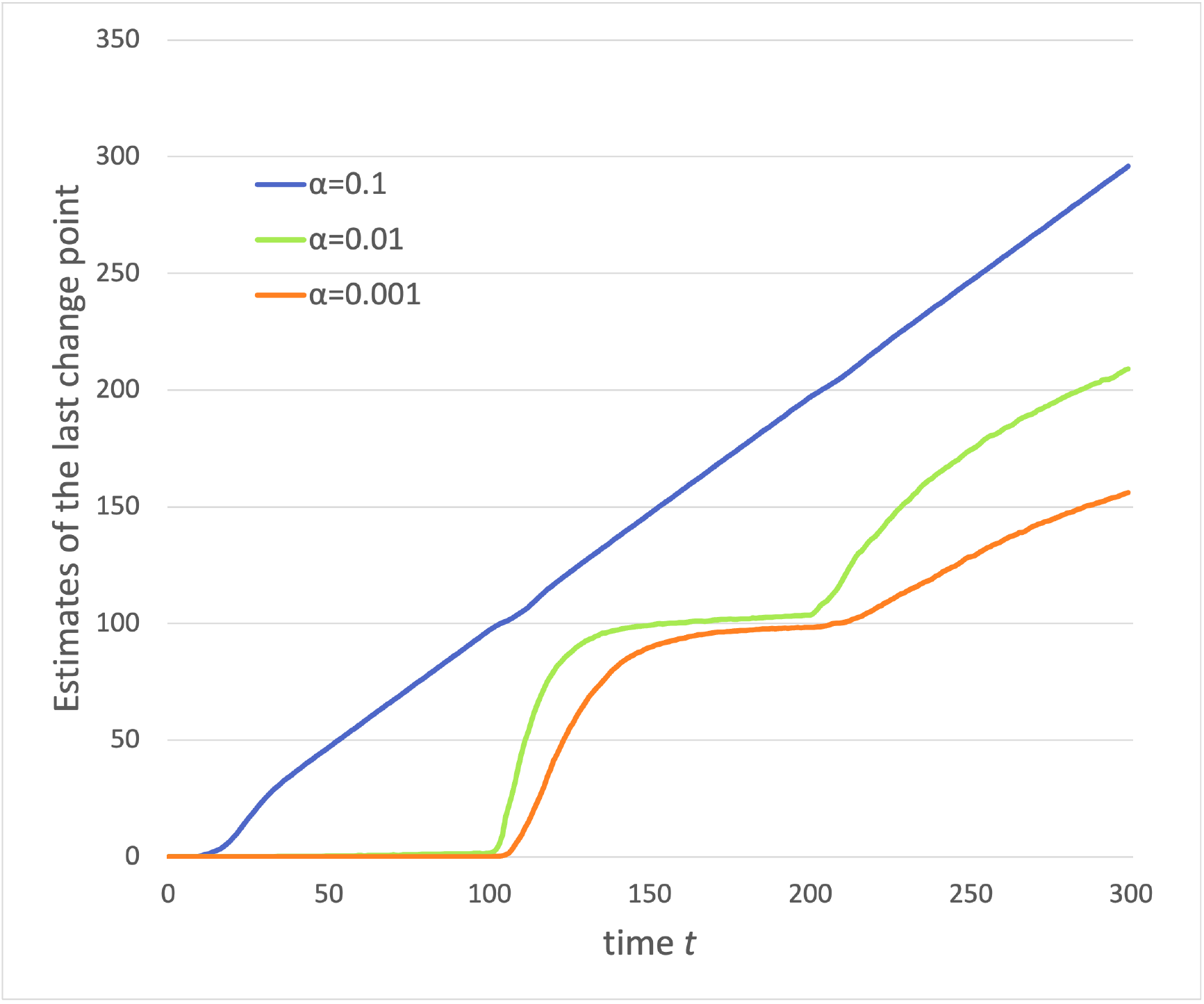}
  \caption{Average of the estimates of the last change point in the $10000$ times trial}
  \label{estimation}
\end{figure}

When $\alpha=0.01$ and $0.001$, it seems that the transition of $\argmax_{\tau_t} v(\tau_t|x^{t-1})$ corresponds to the change of context tree model.
When $\alpha=0.1$, however, $\argmax_{\tau_t} v(\tau_t|x^{t-1}) \simeq t$, so it does not correspond to the change of context tree model.
\section*{Acknowledgement}
This work was supported in part by JSPS KAKENHI Grant Numbers JP17K06446, JP19K04914, and JP19K14989.

\end{document}